\def\ket#1{{\lvert}#1\rangle}
\def\tCOL{{\bf{3COL}}}
\def\tSAT{{\bf{3SAT}}}
\newcommand{\braket}[2]{\langle#1|#2\rangle}
\newcommand{\pare}[1]{\left(#1\right)}
\newcommand{\demi}{\frac{1}{2}}
\newcommand{\tensor}{\otimes}
\newcommand{\prob}[1]{\left|#1\right|^{2}}
\title{A Quantum Characterization\\ of $\NP$}
\author{Hugue Blier\\ Université de Montréal\\ blierhug@iro.umontreal.ca \and
		Alain Tapp\\Université de Montréal\\ tappa@iro.umontreal.ca}
\begin{abstract}
In this article we introduce a new complexity class called \linebreak $\BT$. 
Informally, this is the class of languages for which membership has a
logarithmic-size quantum proof with perfect completeness and soundness which is polynomially close to 1 in a context where the verifier is provided a proof with two unentangled parts. 
We then show that $\BT=\NP$. 
For this to be possible, it is important, when defining the class, 
not to give too much power to the verifier. This result, when
compared to the fact that $\QML=\BQP$, 
gives us new insight on the power of quantum information and the impact of entanglement.
\end{abstract}
\begin{document}

\section{Introduction}

In classical complexity, the concept of proof is extensively used to define very interesting complexity classes such as $\NP$, $\MA$ and $\IP$.  When
allowing the verifier  (and the prover) to be quantum mechanical, we obtain complexity classes such as $\QMA$ and $\QIP$.  Quantum complexity classes can
sometimes turn out to have surprising properties.  For example, in contrast with the classical case, we know that quantum interactive proofs can be
restricted to three messages; that is,  $\QIP=\QIP${\bf (3)} \citep{KW}.

Because of the probabilistic nature of quantum computation, the most natural quantum generalization of $\NP$ is $\QMA$.  This is the class of
languages having polynomial size quantum proofs.  
A quantum proof obviously requires a quantum verifier, but behaves similarly to a classical proof
with regards
to completeness and soundness.  Since group non-membership is in $\QMA$ \citep{053}, but is not known to be in $\MA$ (and therefore $\NP$), we have an
example of a statement having polynomial-size quantum proofs but no known polynomial-size classical proof.

In this paper, we are interested in logarithmic-size quantum proofs. Classically, when considering a
polynomial-time verifier, the concept of logarithmic-size classical proofs is not interesting.  
Any language having logarithmic-size classical proofs
would also be in $\P$, since one can go through {\em every} possible logarithmic-size proof in polynomial time. 

In the quantum case, very short quantum proofs could still be interesting.  Any reasonable classical description of a quantum proof requires a
polynomial number of bits and thus one cannot try all quantum proofs using a classical simulator.  
That being said, if the verifier is simple enough, 
the optimization problem of finding a proof that makes the verifier accept with high enough probability can be turned into a semidefinite
programming problem \citep{043,044} of polynomial size. Thus, if the verifier is simple enough, then the language is in $\P$.  
Also, if the verifier is in $\BQP$, then one still only obtains $\BQP$ \citep{031}.

Although we just argued that logarithmic-size classical and quantum proofs seem uninteresting, by slightly changing the rules of the game, we get an
interesting complexity class.  In preliminary work \citep{cancun}, we showed that $\NP \subseteq \QMLd$. This class is also defined with the promise
that two logarithmic-size unentangled
registers are given to the verifier. This promise gives the verifier more leeway to check the proof and limits the prover's ability to cheat.
Therefore, this gives a new perspective on the properties
of entanglement. 

In parallel with our work, and independently of us,  Aaronson et al.~\citep{Aaronson} have shown that $\tSAT$ is in $\QMLr$ (i.e. with
$\sqrt{n}\polylog(n)$ unentangled registers) with constant completeness
and soundness.  It seems that only two unentangled registers for the certificate are not enough
to check the proof with a constant gap between completeness and soundness; we achieve perfect completeness but soundness polynomially close to one.
They commented on our previous
note \citep{note} emphasizing that this soundness cannot be improved to a constant unless $\QMA(2) = \NEXP$. Note also that since the length of the
proof is logarithmic and the number of registers is constant, showing that $\tCOL$ (the language of graphs colorable with three colors) is in $\QMLd$ would implies that $\NP \subseteq \QMLd$. Therefore, constant
soundness is achieved in \citep{Aaronson} at the cost that their result cannot be generalized to $\NP \subseteq \QMLr$ because the polynomial reduction
would cause the length of the proof to increase polynomially.
In a recent article \citep{Beigi}, it has been shown how to obtain a soundness of $1-\frac{1}{n^{3+\epsilon}}$ for the language $\tSAT$ with two registers. 
\clearpage

The class $\QMLd$ is \emph{small} (included in $\QMA$) but still contains both $\NP$ and $\BPP$. This is an interesting
property since no relation is known between $\NP$ and $\BPP$. Showing that $\QMLd \subseteq \NP$ would somehow imply that classical non-determinism allows us to simulate a polynomial-size quantum circuit. 
 In this paper, we show that by slightly changing the class definition  the paradigm of
unentangled logarithmic size registers leads to a characterization of
the class $\NP$. We thus introduce a new complexity class $\BT$ and show that $\NP =\BT$. Once again, this class is defined to have a polynomially
small gap between completeness and soundness. Compared to our preliminary works, where we defined $\QMLd$, we do not consider
the verifier to work in quantum polynomial time but only to be able to generate a quantum circuit of polynomial size that acts on a logarithmic number of qubits.
This still allows the verifier to do the protocol such as in \citep{cancun} but also to define a classical polynomial size certificate for the class
$\BT$ wich imply $\BT \subseteq \NP$

\section{Definitions and Theorem}

A formal definition of the class $\BT$ (Classical polynomial-time quantum Merlin Arthur with two unentangled logarithmic size certificates) will
follow.  
Informally, it can be seen as the class of languages for which there exists a logarithmic quantum
proof with the promise that it is separated into two unentangled parts. 
The verifier works in classical polynomial time. It is allowed to produce
quantum circuits of polynomial size acting on a logarithmic number of qubits.

The following definition is simply a formal statement of what is usually referred to as a set of gate one can \emph{efficiently approximate}.

\begin{definition}
A {\em natural gate set} $\mathcal{U}$ is a finite set of unitary transformations acting on a finite number of qubits such that for all $U \in
\mathcal{U}$ there exists a classical algorithm that can approximate every element of the matrix $U$ up to $n$ bits in time polynomial in $n$.  
Furthermore $C(\mathcal{U})$ is the set of circuit composed of gates from $\mathcal{U}$. 
\end{definition}

With this definition in hand, we can define formally $\BT$. 

\clearpage

\begin{definition} \label{defclasse}
A language $L$ is in $\BT$ if there exists 
a natural gate set $\cal U$, polynomials $p$ and $q$,
a constant $c$ and a classical algorithm $\mathcal{V}$ running in
polynomial time that is allowed, for a word $x$ where $|x|=n$, to produce a quantum circuit $\mathcal{Q} = \mathcal{V}(x) \in C(\mathcal{U})$ of polynomial size $q(n)$
acting on $O(\log(n))$ qubits such that :\\


1) (Completeness) if $x \in L$, there exists a state $\ket{w} \in \left( {\cal H}_2^{\otimes c \log(n)}
\right)^{\otimes2}$ s.t.
\begin{flushleft}

$\hspace{12mm} \Pr[\mathcal{Q}(\ket{w})=accept]=1$,\ where $\ket{w}=\ket{w_1} \otimes \ket{w_2}$; \\

\end{flushleft}


2) (Soundness) if $x \not\in L$, with $|x|=n$, then for all states $\ket{w} \in \left( {\cal H}_2^{\otimes c \log(n)}
\right)^{\otimes 2} $, 
\begin{flushleft}

$\hspace{12mm} \Pr[\mathcal{Q}(\ket{w})=accept]<1-\frac{1}{p(n)}$,\ where $\ket{w}=\ket{w_1} \otimes \ket{w_2}$.

\end{flushleft}
\end{definition}

The main result is that $\NP = \BT$.  This will be proven using the following well-known $\NP$-complete language:

\begin{definition} $\tCOL$ is the set of graphs $G=(V,E)$ (using any natural encoding into strings) for which there exists a coloring
$C:V\rightarrow\{0,1,2\}$ such that for all $(x,y)$ in $E$, $C(x) \neq C(y)$. 
\end{definition}

\begin{theorem} $\NP = \BT$
\end{theorem}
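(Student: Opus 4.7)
I would prove the two inclusions separately.

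For $\BT \subseteq \NP$, the classical $\NP$-certificate for a word $x \in L$ would consist of tables of amplitudes describing the optimal unentangled states $\ket{w_1}$ and $\ket{w_2}$. Each lives in a Hilbert space of dimension $2^{c\log n} = n^c$, so each can be written down with $n^c$ complex amplitudes; truncating each amplitude to $\Theta(\log n)$ bits of precision still gives a polynomial-size classical witness. The $\NP$-verifier, given $x$ and this certificate, first runs the polynomial-time algorithm $\mathcal{V}$ to obtain the description of the quantum circuit $\mathcal{Q} = \mathcal{V}(x) \in C(\mathcal{U})$, and then classically simulates $\mathcal{Q}$ on the tensor product of the two described states. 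Every gate application is a matrix-vector multiplication on a vector of polynomial dimension; the natural gate set assumption guarantees that every entry of every gate can be approximated to any polynomial precision in polynomial time, so the total rounding error can be kept well below $1/p(n)$. The $\NP$-verifier accepts iff the simulated acceptance probability exceeds, say, $1 - 1/(2p(n))$; perfect completeness and the $1/p(n)$-soundness gap ensure correctness.

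For $\NP \subseteq \BT$, by $\NP$-completeness it suffices to show $\tCOL \in \BT$. I would reuse the protocol of \citep{cancun}: the honest prover sends two copies of
\[
\ket{w} \;=\; \frac{1}{\sqrt{|V|}}\sum_{v\in V}\ket{v}\ket{c(v)},
\]
encoding a valid 3-coloring $c$ in $\lceil\log|V|\rceil + 2$ qubits, which is logarithmic. The verifier's circuit picks uniformly among a handful of sub-tests: (i) a consistency test that measures both registers in the computational basis and rejects only if the same vertex is returned by the two copies with different colors; (ii) an edge test that, for a uniformly random edge $(u,v)\in E$, performs an appropriate measurement on the two unentangled copies to verify that the color attached to $u$ in one copy differs from the color attached to $v$ in the other. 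On honest proofs every sub-test accepts with probability $1$, giving perfect completeness. For soundness, the unentanglement promise is essential: it forces the cheating state to be a product, and a case analysis shows that if $G$ is not 3-colorable then at least one sub-test rejects with probability $\Omega(1/\mathrm{poly}(n))$.

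The main obstacle is the soundness analysis in the $\NP \subseteq \BT$ direction. A cheating prover is unconstrained and may concentrate his amplitudes so that the edge test almost never exposes a monochromatic edge, or that the consistency test almost never catches conflicting colorings; the argument must exploit both the product structure of the proof and the interaction between the consistency and edge sub-tests. Balancing them so as to obtain a $1/\mathrm{poly}(n)$ rejection gap \emph{while} preserving perfect completeness is the technical heart of the argument. By contrast, the $\BT \subseteq \NP$ direction is essentially syntactic, relying only on the two features that distinguish $\BT$ from $\QMLd$: states on $O(\log n)$ qubits have polynomial-size classical descriptions, and a polynomial-size circuit acting on such states can be classically simulated in polynomial time.
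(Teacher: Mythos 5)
Your $\BT \subseteq \NP$ direction is essentially the paper's argument: the witness is a polynomial-size classical description of the logarithmic-size states, the verifier reconstructs $\mathcal{Q}=\mathcal{V}(x)$ and simulates it to precision well below the gap $1/p(n)$, and accepts iff the estimated acceptance probability clears a threshold between completeness and soundness. That part is fine.

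The $\NP \subseteq \BT$ direction, however, has a genuine gap: your protocol omits the one test that makes soundness possible, namely a test certifying that \emph{every} vertex appears in the superposition with probability $\Omega(1/n)$. Without it, soundness fails outright: take a non-3-colorable $G$ containing a vertex $v$ whose removal leaves a 3-colorable graph, and let the prover send two copies of $\frac{1}{\sqrt{n-1}}\sum_{i\neq v}\ket{i}\ket{C(i)}$ for a valid coloring $C$ of $G-\{v\}$. Your consistency test never rejects (the coloring is consistent), and your edge test never rejects either, since no measurement ever returns $v$ and all edges actually observed are properly colored; so this dishonest proof is accepted with probability $1$. Relatedly, your edge test as stated ("for a uniformly random edge $(u,v)$, verify that the color attached to $u$ in one copy differs from the color attached to $v$ in the other") is not implementable on a logarithmic-size proof: you cannot query the color of a chosen vertex, you can only measure and see which vertex you happen to get. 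The paper instead measures both registers in the computational basis and checks the edge condition only when the two outcomes happen to be adjacent; this yields a $1/\mathrm{poly}(n)$ rejection probability \emph{only because} a separate test guarantees each vertex is measured with probability at least $1/(10n)$. The paper's mechanism for this is a Fourier-basis test: measure the color part in the Fourier basis over three outcomes and the node part in the Fourier basis over $n$ outcomes; for an honest state, outcome $F_3\ket{0}$ on the color part collapses the node part to the uniform superposition $F_n\ket{0}$, so the verifier rejects if it sees $F_3\ket{0}$ on the colors but not $F_n\ket{0}$ on the nodes. Combined with a swap test (which you also omit, and which is needed to force the two unentangled registers to induce close measurement distributions so that the cross-register edge check is meaningful) and a lemma showing each frequently-observed vertex has an essentially deterministic color, this pins down a classical coloring from any nearly-accepted proof and exposes a monochromatic edge with probability $\Omega(1/n^6)$. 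You correctly identify amplitude concentration as the central difficulty, but the proposal does not contain the idea that resolves it.
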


This will be proven in the two following sections. In the next section, we will describe an algorithm showing that $\tCOL$ is in $\BT$. It follows that
$\NP \subseteq \BT$. Then it will be proved that $\BT \subseteq \NP$.

\section{Logarithmic size quantum proof for $\tCOL$}

In this section, we will prove the following statement:

\begin{lemma}\label{includeinNP}
$\NP \subseteq \BT$
\end{lemma}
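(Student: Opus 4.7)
Since $\tCOL$ is $\NP$-complete under polynomial-time many-one reductions, and $\BT$ is manifestly closed under such reductions (the classical verifier simply precomposes the reduction with the inner verifier), it suffices to exhibit a $\BT$-protocol for $\tCOL$. My plan is to have the honest prover send, on input a graph $G=(V,E)$ with $n = |V|$, two unentangled copies of the state
\[
\ket{\psi} \;=\; \frac{1}{\sqrt{n}} \sum_{v \in V} \ket{v}\ket{C(v)},
\]
where $C$ is a valid $3$-coloring. Each register uses $\lceil \log n\rceil + 2 = O(\log n)$ qubits, so the length constraint of Definition~\ref{defclasse} is met, and the first register encodes ``the same'' classical information as the second, even though measurement collapses the superposition after a single query.

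The verifier, given $G$ as input, will produce in classical polynomial time a quantum circuit that picks uniformly at random among three subtests. The first is a \emph{swap test} between the two registers, which accepts with probability $(1 + |\braket{\psi_1}{\psi_2}|^2)/2$ and hence forces the two proofs to be close to one another. The second is a \emph{consistency/coloring test} that measures both registers in the computational basis to obtain $(v_1, c_1)$ and $(v_2, c_2)$ and rejects exactly when $v_1 = v_2$ with $c_1 \neq c_2$ (inconsistent coloring), or when $(v_1, v_2) \in E$ with $c_1 = c_2$ (monochromatic edge). The third is a \emph{uniformity test} on the vertex marginal, designed to penalize provers who place most of their amplitude on a small subset of $V$; I will implement this via the quantum Fourier transform over $\mathbb{Z}_n$ on the vertex label of one register, suitably symmetrized with the other register so that the honest state accepts with certainty.

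Completeness is immediate by inspection: on the honest proof each of the three subtests accepts with probability exactly $1$, so $\Pr[\mathcal{Q}(\ket{\psi}\otimes\ket{\psi})=\mathrm{accept}] = 1$. For soundness, assume $G \notin \tCOL$ and expand each proof as $\ket{\psi_i} = \sum_{v, c} \alpha^{(i)}_{v, c} \ket{v}\ket{c}$. The swap test contributes a rejection probability of $(1 - |\braket{\psi_1}{\psi_2}|^2)/2$, allowing me to reduce to the case $\ket{\psi_1} = \ket{\psi_2}$ with common amplitudes $\alpha_{v, c}$ at the cost of an error absorbed into the final soundness bound. The consistency/coloring test then rejects with probability
\[
\sum_v \sum_{c \neq c'} |\alpha_{v, c}|^2 |\alpha_{v, c'}|^2 \;+\; \sum_{(u, v) \in E} \sum_c |\alpha_{u, c}|^2 |\alpha_{v, c}|^2,
\]
which, if smaller than $1/q(n)$ for all polynomials $q$, would force $\alpha_{v, c}$ to concentrate on a single color $c(v)$ per vertex and to avoid every monochromatic edge, thereby describing a valid $3$-coloring of $G$ and contradicting $G \notin \tCOL$. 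The uniformity test handles the boundary case in which a dishonest prover places nearly all of its weight on a tiny, properly $3$-colorable sub-population of vertices, by ensuring the marginal $\sum_c |\alpha_{v,c}|^2$ is not too far from $1/n$.

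The main obstacle I anticipate is the soundness calculation: I must quantitatively combine the three lower bounds on per-test rejection to obtain a uniform $1/p(n)$ bound, controlling the interactions between swap-test error, concentration of $\alpha_{v,c}$ on one color per vertex, and non-uniformity of the vertex marginal. Everything else is routine: the verifier is clearly a classical polynomial-time algorithm, and the produced circuit is of polynomial size over a natural gate set (for instance $\{H,T,\mathrm{CNOT}\}$ augmented with the $n$-dimensional QFT, which admits the standard polynomial-size approximation by gates from $\mathcal U$).
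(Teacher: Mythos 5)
Your overall plan coincides with the paper's: reduce to $\tCOL$, have the prover send two unentangled copies of $\frac{1}{\sqrt n}\sum_v \ket{v}\ket{C(v)}$, and combine a swap test, a computational-basis consistency/edge test, and a Fourier-based test that prevents the prover from hiding weight on a small colorable subgraph. The first two tests and the completeness argument match the paper exactly. The gap is in your third test. As you describe it --- ``the quantum Fourier transform over $\mathbb{Z}_n$ on the vertex label of one register, suitably symmetrized with the other register'' --- it does not have perfect completeness, and symmetrization with the other proof register does not repair this. The honest state's vertex label is correlated with its color label: the reduced state of the vertex register of $\frac{1}{\sqrt n}\sum_v\ket{v}\ket{C(v)}$ is a mixture of the uniform superpositions over the three color classes, so inverting $F_n$ and checking for $\ket{0}$ succeeds only with probability $\sum_k (n_k/n)^2<1$, where $n_k$ is the size of color class $k$. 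Since Definition~\ref{defclasse} demands completeness exactly $1$, this breaks the protocol. The paper's fix is to measure the \emph{color} part of the same register in the Fourier basis first and only demand uniformity of the vertex part when the color outcome is $F_3\ket{0}$: conditioned on that outcome the honest state collapses exactly to $F_n\ket{0}$, restoring perfect completeness, while the soundness analysis shows this conditioning event still occurs with constant probability (at least $1/5$) for any proof that survives the other tests. You need this conditioning (or an equivalent device) for your test three to be both complete and useful.

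A secondary point: your soundness sketch asserts that a small rejection probability in the consistency/coloring test alone ``would force $\alpha_{v,c}$ to concentrate on a single color per vertex and to avoid every monochromatic edge, thereby describing a valid $3$-coloring,'' which is false as stated --- a prover concentrating all amplitude on one vertex passes that test perfectly. You do flag that the uniformity test is meant to exclude this, but the quantitative interplay is exactly where the work lies: one must show that surviving the swap and consistency tests forces every vertex of non-negligible amplitude to have an essentially determined color (the paper's Lemma~\ref{lemma2}), that the uniformity test then forces \emph{every} vertex to carry amplitude at least $\Omega(1/n)$ (Lemmas~\ref{lemma3}--\ref{lemma5}), and that the induced coloring must therefore contain a monochromatic edge detected with probability $\Omega(1/n^6)$. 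You correctly identify this as the main obstacle, but none of it is carried out, so the proposal is a plan whose one concrete design choice (the third test) needs to be corrected before the plan can be executed.
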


To prove the statement we will show that the language $\tCOL$ is in class $\BT$. 
We will address the completeness in \ref{completeness-thm} and the soundness in \ref{soundness-thm}. The proof is conclusive since $\tCOL$ is $\NP$-complete over polynomial time reduction. On the one hand, 
the soundness in $\BT$ only has to be polynomially close to one. On the other hand, the proof is of logarithmic size and the number of registers is
constant. Therefore, any decision problem in $\NP$ reduced to $\tCOL$ will still have a
protocol with a logarithmic-size proof and a satisfying soundness.

\subsection{Protocol and completeness}

We describe the verifier for the language $\tCOL$. The registers of the proof $\ket{\Psi}$ and $\ket{\Phi}$ are both regarded as vectors in ${\cal H}_n \otimes{\cal H}_3$, respectively the node and
color part of the register. The verifier performs one of the following three tests with equal probability. 
If the test succeeds, he accepts, otherwise he rejects.

\begin{itemize}
\item{{\bf Test 1:}} (Equality of the two registers) Perform the {\bf swap-test} \citep{finger} on $\ket{\Psi}$ and $\ket{\Phi}$ and reject if the
test fails.
\item{{\bf Test 2:}} (Consistency with the graph) $\ket{\Psi}$ and $\ket{\Phi}$ are measured in the computational basis, yielding $(i
,C(i)),(i',C'(i))$,  \\
a) if $i=i'$, verify that $C(i)=C'(i)$.  \\
b) otherwise if $(i,i')\in E$ verify that $C(i) \neq C'(i)$. 

\item{{\bf Test 3:}} (All nodes are present) For both $\ket{\Psi}$ and $\ket{\Phi}$, measure the index part of the register and the color part
separately in the Fourier basis. If the outcome of the measurement of the color part is $F_3 \ket{0}$ and the outcome of the index part is not $F_n \ket{0}$, then reject.
\end{itemize}

The following Lemma states that the protocol has completeness 1. 

\begin{lemma}
\label{completeness-thm} If $x \in$ $\tCOL$ then there exists a proof that the verifier described above will accept with probability 1.
\end{lemma}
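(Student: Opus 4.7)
The plan is to exhibit an explicit product witness and check that each of the three tests accepts deterministically. Fix a valid 3-coloring $C:V\to\{0,1,2\}$ of the input graph $G=(V,E)$ with $|V|=n$, and let the prover send
\begin{equation*}
\ket{w_1}=\ket{w_2}=\frac{1}{\sqrt{n}}\sum_{v\in V}\ket{v}\ket{C(v)}\in\mathcal{H}_n\otimes\mathcal{H}_3,
\end{equation*}
so $\ket{w}=\ket{w_1}\otimes\ket{w_2}$ is unentangled and of logarithmic size, fitting Definition~\ref{defclasse} for any $c$ with $c\log n\ge\lceil\log n\rceil+2$. Call this common state $\ket{\Psi}$.

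For Test 1, $\braket{\Psi}{\Psi}=1$, so the standard analysis of the swap-test (\citep{finger}) gives acceptance probability $\tfrac12(1+|\braket{\Psi}{\Psi}|^2)=1$. For Test 2, the computational-basis measurements produce two independent samples $(i,C(i))$ and $(i',C(i'))$ drawn uniformly over $V$; in sub-case (a) the two recorded colors are both the single value $C(i)$, and in sub-case (b) the validity of $C$ on the edge $(i,i')\in E$ forces $C(i)\ne C(i')$, so either sub-case accepts with certainty.

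The only nontrivial part is Test 3, where I would compute the joint amplitudes of $(F_n\otimes F_3)\ket{\Psi}$ directly. Writing $\omega=e^{2\pi i/3}$, the amplitude of $\ket{j}\ket{k}$ is
\begin{equation*}
\alpha_{j,k}=\frac{1}{n\sqrt{3}}\sum_{v\in V}e^{2\pi i\,vj/n}\,\omega^{C(v)k}.
\end{equation*}
Specialising to the event of interest, $k=0$, the color-dependent factor disappears and $\alpha_{j,0}=\frac{1}{n\sqrt{3}}\sum_v e^{2\pi i\,vj/n}$, which vanishes for every $j\ne 0$ by the standard geometric-series identity and equals $\tfrac{1}{\sqrt{3}}$ for $j=0$. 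Hence conditional on the color register yielding $F_3\ket{0}$, the index register yields $F_n\ket{0}$ with probability $1$, so the rejection condition of Test 3 is never triggered. Since each of the three branches accepts with probability $1$, the overall acceptance probability is $1$.

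The main obstacle is conceptual rather than computational: one must recognise that the same state $\ket{\Psi}$ is forced to play three very different roles simultaneously (a swap-test fingerprint, a consistency certificate, and a Fourier-uniform distribution over $V$), and that choosing uniform amplitudes over $V$ with the color attached coherently is precisely what makes all three tests pass at once. Everything else reduces to the short Fourier calculation above.
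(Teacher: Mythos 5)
Your proposal is correct and follows essentially the same route as the paper: the same witness $\frac{1}{\sqrt{n}}\sum_v\ket{v}\ket{C(v)}$ in both registers, the same observations for Tests 1 and 2, and for Test 3 a direct Fourier computation that is equivalent to the paper's (the paper applies $I\otimes F_3$ and notes the post-measurement index state is uniform, whereas you compute the joint amplitudes $\alpha_{j,0}$ and invoke the geometric-series cancellation — the same calculation read in the other order).
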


\begin{proof} 
Let the quantum proof be $\ket{\Psi}=\ket{\Phi}= \frac{1}{\sqrt{n}}\sum_{i}\ket{i}\ket{C(i)}$ 
where $C$ is a valid coloring of the graph $G$. The probability that the {\bf swap-test} outputs {\em equal} is
$\demi+\frac{|\braket{\Psi}{\Phi}|^{2}}{2}$ \citep{finger}. Since $\ket{\Psi}=\ket{\Phi}$, the probability that Test 1 succeeds is 1.  Because $C$ is a
valid coloring of $G$, we have that Test 2 succeeds with probability 1. To see that Test 3 will also succeed with certainty, it is sufficient to see
that:
\begin{eqnarray*}
(I \tensor F_3) \frac{1}{\sqrt{n}}\sum_{j}\ket{j}\ket{c_{j}} = \frac{1}{\sqrt{n}}\sum_{j}\ket{j}\frac{1}{\sqrt{3}}\sum_{k}e^{2 \pi i c_{j} k }\ket{k}
\end{eqnarray*}
and therefore, if the color register is measured to be in the state $\ket{0}$, the resulting state will be $\frac{1}{\sqrt{n}}\sum_{i}\ket{i}=F_n
\ket{0}$.
\end{proof}
\clearpage

\subsection{Soundness}

Let us now consider the case where $G \not \in$ $\tCOL$. \ref{soundness-thm} at the end of this section states that if $G$ is not 3-colorable, then there is a non-negligible probability that one of
the three tests will fail. To prove this, we will require the following five simple Lemmas.

Because we know that the two registers given by the prover are not entangled, they can be written separately as $$
\ket{\Psi} = \sum_{i}\alpha_{i}\ket{i}\sum_{j}\beta_{i,j}\ket{j} \ \ \ \ \ \ \ 
\ket{\Phi} = \sum_{i}\alpha'_{i}\ket{i}\sum_{j}\beta'_{i,j}\ket{j}$$
where $\sum_{i}\prob{\alpha_{i}} = 1$ and $\forall i$, $\sum_{j}\prob{\beta_{i,j}} = 1$ 
and likewise for $\ket{\Phi}$. 
It is not difficult to see that the use of unentangled mixed states would not help the prover.

The following Lemmas will give us some useful facts on the behavior of the state when measured in the computational basis. The next Lemma says that if
Test 1 succeeds with high enough probability, then the distribution of outcomes will be similar for the two states.

\begin{lemma}
\label{lemma1} Let $\ket{\Psi}$ and $\ket{\Phi}$ be as defined earlier. If there exists a $k$ and an $l$ such that $\Huge{|}  |\alpha_k \beta_{k,l}|
-|\alpha'_k \beta'_{k,l}| \Huge{|} \geq 1/n^3$ then Test 1 will fail with probability at least $\frac{1}{8n^6}$.
\end{lemma}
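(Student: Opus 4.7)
The plan is to directly compute the swap-test's rejection probability and lower bound it. Recall from \citep{finger} (and as already used in the completeness proof) that the swap-test applied to $\ket{\Psi}$ and $\ket{\Phi}$ accepts with probability $\frac{1}{2}+\frac{|\braket{\Psi}{\Phi}|^2}{2}$, so it fails with probability
\begin{equation*}
\frac{1}{2}\bigl(1 - |\braket{\Psi}{\Phi}|^2\bigr).
\end{equation*}
Thus it suffices to show $|\braket{\Psi}{\Phi}|^2 \leq 1 - \frac{1}{4n^6}$.

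Write $x_{ij} := |\alpha_i \beta_{i,j}|$ and $y_{ij} := |\alpha'_i \beta'_{i,j}|$. Since $\ket{\Psi}$ and $\ket{\Phi}$ are unit vectors, the relations $\sum_i |\alpha_i|^2 = 1$ and $\sum_j |\beta_{i,j}|^2 = 1$ (and analogously for the primed quantities) give $\sum_{i,j} x_{ij}^2 = \sum_{i,j} y_{ij}^2 = 1$. The triangle inequality applied termwise to the expansion of $\braket{\Psi}{\Phi}$ yields $|\braket{\Psi}{\Phi}| \leq \sum_{i,j} x_{ij} y_{ij}$. The key algebraic identity is then
\begin{equation*}
\sum_{i,j} (x_{ij} - y_{ij})^2 = \sum_{i,j} x_{ij}^2 + \sum_{i,j} y_{ij}^2 - 2 \sum_{i,j} x_{ij} y_{ij} = 2 - 2 \sum_{i,j} x_{ij} y_{ij}.
\end{equation*}

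The hypothesis of the lemma says that at least one term in the sum on the left is at least $1/n^6$, so $\sum_{i,j} x_{ij} y_{ij} \leq 1 - \frac{1}{2n^6}$. Since this quantity lies in $[0,1]$, squaring only decreases it, so $|\braket{\Psi}{\Phi}|^2 \leq \bigl(1 - \frac{1}{2n^6}\bigr)^2 \leq 1 - \frac{1}{2n^6} \leq 1 - \frac{1}{4n^6}$. Plugging this into the swap-test failure probability gives the claimed bound $\frac{1}{8n^6}$.

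The argument is essentially routine once the right quantities are identified; there is no real obstacle. The only subtle point is that although the amplitudes $\alpha_i \beta_{i,j}$ and $\alpha'_i \beta'_{i,j}$ are complex and their phases could in principle make $|\braket{\Psi}{\Phi}|$ smaller than $\sum x_{ij} y_{ij}$, this only helps us (it increases the failure probability), so the triangle inequality step loses nothing that matters for the lemma.
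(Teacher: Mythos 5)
Your proof is correct, and it takes a genuinely different route from the paper's. The paper bounds the overlap via the fidelity--versus--trace-distance inequality: it invokes $\sqrt{1-|\braket{\Psi}{\Phi}|^2}\ge D(P,Q)$ for the computational-basis measurement and then lower-bounds the total variation distance by the single term $\frac{1}{2}\bigl||\alpha_k\beta_{k,l}|^2-|\alpha'_k\beta'_{k,l}|^2\bigr|$. You instead expand $\sum_{i,j}(x_{ij}-y_{ij})^2=2-2\sum_{i,j}x_{ij}y_{ij}$ and use the triangle inequality $|\braket{\Psi}{\Phi}|\le\sum_{i,j}x_{ij}y_{ij}$, which is entirely elementary and self-contained. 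Your route has a concrete advantage: the lemma's hypothesis is a gap of $1/n^3$ between the \emph{moduli} $|\alpha_k\beta_{k,l}|$ and $|\alpha'_k\beta'_{k,l}|$, and your argument consumes exactly that (squaring the gap once to get $1/n^6$), whereas the paper's chain needs $\bigl|\,|\alpha_k\beta_{k,l}|^2-|\alpha'_k\beta'_{k,l}|^2\bigr|\ge 1/n^3$, i.e.\ it implicitly uses $|a^2-b^2|\ge|a-b|$, which requires $a+b\ge 1$ and does not hold in general; with the hypothesis as literally stated the paper's route only yields a $1/n^{12}$-type bound. The one cosmetic remark: you could justify $\sum_{i,j}x_{ij}y_{ij}\le 1$ either by Cauchy--Schwarz or simply by the nonnegativity of $\sum_{i,j}(x_{ij}-y_{ij})^2$, and your final chain actually proves the stronger failure probability $\frac{1}{4n^6}$ before you relax it to the claimed $\frac{1}{8n^6}$.
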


\begin{proof} Let $P_{i,j}=|\alpha_i \beta_{i,j}|^2$ and $Q_{i,j}=|\alpha'_i \beta'_{i,j}|^2$ be the probability distributions when $\ket{\Phi}$ and
$\ket{\Psi}$ are measured in the computational basis. We will use the fact that, for any von Neumann measurement, the distances defined below are such
that $D(\ket{\Psi}, \ket{\Phi}) \ge D(P,Q)$, where $P$ and $Q$ are the classical outcomes distributions of the measurement. Then,
\begin{eqnarray*}
\sqrt{1-|\braket{\Psi}{\Phi}|^{2}}  
&{\stackrel{\text{\tiny def}}{=}}&  D(\ket{\Psi}, \ket{\Phi}) \\
&\geq &   D(P, Q)   \\
&{\stackrel{\text{\tiny def}}{=}}&       \demi \sum_{ij} \left||\alpha_i \beta_{i,j}|^{2} - |\alpha'_i \beta'_{i,j}|^{2}\right|\\
&\geq &   \demi\left||\alpha_k \beta_{k,l}|^{2} - |\alpha'_k \beta'_{k,l}|^{2}\right|\\
&\geq &   \demi \cdot \frac{1}{n^{3}}\\
\end{eqnarray*}

This means that $|\braket{\Psi}{\Phi}|^{2} \leq 1-\frac{1}{4n^6}$ and that Test 1 will fail with probability at least $\frac{1}{8n^6}$.
\end{proof}

The next Lemma states that nodes with a high enough probability of being observed have a well-defined color.

\begin{lemma}
\label{lemma2} Given that the quantum proof would pass both Test 1 and part a) of Test 2 with probability of failure no larger than $\frac{1}{8n^6}$, it must
be that $\forall i$ for which $|\alpha_{i}| \ge \frac{1}{n^{2}} \text{, } \exists! j \text{ such that } |\beta_{i,j}|^{2} \ge \frac{99}{100}$.
\end{lemma}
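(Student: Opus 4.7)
My plan is to argue by contradiction, showing that if Tests~1 and~2a each fail with probability at most $1/(8n^6)$ but some node $i_0$ violates the conclusion, then Test~2a would have to fail with probability strictly exceeding $1/(8n^6)$. Uniqueness is automatic, since two colors with probability $\ge 99/100$ at the same node would already exceed the total mass~$1$; only existence needs proof. Under the negation, the maximum color probability at $i_0$ is at least $1/3$ (the average of three nonnegative numbers summing to~$1$) but strictly below $99/100$, so the remaining $>1/100$ of mass is shared between at most two colors and the second-largest exceeds $1/200$. This identifies distinct colors $j_1,j_2$ with $|\beta_{i_0,j_1}|^2\geq 1/3$ and $|\beta_{i_0,j_2}|^2>1/200$.

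Next I transport this color spread to the other register via the contrapositive of \ref{lemma1}: the Test~1 hypothesis yields
$$\bigl||\alpha_k\beta_{k,l}|-|\alpha'_k\beta'_{k,l}|\bigr|<\frac{1}{n^3}\quad\text{for every }k,l,$$
and since $|\alpha_{i_0}\beta_{i_0,j_k}|$ is of order $|\alpha_{i_0}|\geq 1/n^2\gg 1/n^3$, the primed magnitudes $|\alpha'_{i_0}\beta'_{i_0,j_k}|$ are, for $n$ large enough, at least half of the unprimed ones. I would then lower-bound the probability that Test~2a rejects by the contribution of the event ``register~1 yields $(i_0,j_1)$ and register~2 yields $(i_0,j_2)$, or the reverse'':
$$\Pr[\text{Test 2a fails}]\;\geq\;|\alpha_{i_0}\beta_{i_0,j_1}|^2\cdot|\alpha'_{i_0}\beta'_{i_0,j_2}|^2+|\alpha_{i_0}\beta_{i_0,j_2}|^2\cdot|\alpha'_{i_0}\beta'_{i_0,j_1}|^2,$$
and compare the result with the assumed bound $1/(8n^6)$ to extract the contradiction.

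The main obstacle is the final numerical comparison. A naive product of the four magnitudes above is only $O(1/n^8)$ because each factor is of order $|\alpha_{i_0}|\geq 1/n^2$, which does not immediately beat $1/(8n^6)$; the argument therefore has to be sharpened. The natural way to do this is to work instead with the clean single-node identity
$$\Pr[\text{Test 2a fails at }i_0]=|\alpha_{i_0}|^2|\alpha'_{i_0}|^2\Bigl(1-\sum_{j}|\beta_{i_0,j}|^2|\beta'_{i_0,j}|^2\Bigr),$$
where the parenthesized factor is an absolute positive constant whenever the two conditional color distributions at $i_0$ are both non-degenerate, combined with the stronger total-variation consequence $\sum_{k,l}\bigl||\alpha_k\beta_{k,l}|^2-|\alpha'_k\beta'_{k,l}|^2\bigr|\leq 1/n^3$ afforded by the swap test, which transfers the non-degeneracy of $|\beta_{i_0,\cdot}|^2$ to $|\beta'_{i_0,\cdot}|^2$. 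Pushing those two ingredients through a careful bookkeeping of constants is the delicate step that closes the gap and yields the contradiction.
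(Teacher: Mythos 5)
Your overall strategy is the paper's: negate the conclusion to get two colors $j_1,j_2$ at $i_0$ each carrying squared amplitude at least $1/200$, transfer one of them to the primed register via the contrapositive of \ref{lemma1}, and lower-bound the failure probability of part a) of Test~2 by the probability of seeing $(i_0,j_1)$ in one register and $(i_0,j_2)$ in the other. The place where you stall --- the ``naive product is only $O(1/n^8)$'' --- is precisely where the paper's proof ends, and the discrepancy comes from a mismatch inside the paper itself: the lemma as stated assumes $|\alpha_i|\ge 1/n^2$ (hence $|\alpha_i|^2\ge 1/n^4$), but the paper's proof, and the only place the lemma is invoked (\ref{lemma3}, which applies it to nodes ``with probability larger than $1/n^2$''), use the hypothesis $|\alpha_i|^2\ge 1/n^2$. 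Under that reading each of your two factors is at least $\tfrac{1}{200n^2}-O(1/n^3)$, the product is $\Omega(1/n^4)$, and it beats $\tfrac{1}{8n^6}$ for large $n$ --- the ``naive'' bound you wrote down is the whole proof. You should simply adopt the corrected hypothesis and finish there.

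The genuine gap is that your proposed rescue of the literal reading cannot work, and you should not leave it as a ``delicate bookkeeping'' step. In your single-node identity the prefactor $|\alpha_{i_0}|^2|\alpha'_{i_0}|^2$ is still $\Theta(1/n^8)$ when $|\alpha_{i_0}|^2\approx 1/n^4$ (the swap test forces $|\alpha'_{i_0}|^2$ to track $|\alpha_{i_0}|^2$, not to be larger), and the parenthesized factor is at most $1$; so the contribution of node $i_0$ to the failure probability of Test~2a is at most $O(1/n^8)\ll 1/(8n^6)$ \emph{no matter how scrambled its color distribution is}. No sharpening of constants closes a gap of two powers of $n$: with the threshold $|\alpha_i|\ge 1/n^2$ the tests simply do not see such a node, and the statement is not provable by this (or, as far as I can tell, any) route with these parameters. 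One further small correction: \ref{lemma1} as stated controls $\bigl||\alpha_k\beta_{k,l}|-|\alpha'_k\beta'_{k,l}|\bigr|$ (no squares), so when you convert to squared probabilities you pick up a cross term, $|\alpha'_k\beta'_{k,l}|^2\ge |\alpha_k\beta_{k,l}|^2-2/n^3$ rather than $-1/n^3$; this is harmless for the final comparison but should be written out.
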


\begin{proof} Suppose for the sake of contradiction that there exists an $i$ such that 
$|\alpha_{i}|^{2} \ge \frac{1}{n^{2}}$ for which two of the $\beta_{i,j}$ 
have a squareed norm larger than $1/200$. 
Hence, w.l.o.g we can assume that $|\beta_{i,0}|^2> 1/200$ and $|\beta_{i,1}|^2> 1/200$. Because of \ref{lemma1}, we have that 
$\prob{\alpha'_{i}}\prob{\beta'_{i,1}} 
\ge \prob{\alpha_{i}}\prob{\beta_{i,1}} - \frac{1}{n^{3}} 
\ge \frac{1}{200n^{2}} - \frac{1}{n^{3}}$. 
Therefore, the probability of obtaining $(i,0)$ when measuring $\ket{\Psi}$ and $(i, 1)$ when measuring  $\ket{\Phi}$ is at least $$
\pare{\frac{1}{200n^{2}}} 
\pare{\frac{1}{200n^{2}} - \frac{1}{n^{3}}} \ge \frac{1}{8n^{6}}$$ 
when $n$ is large enough.  This is in contradiction with the hypothesis.  Therefore, the norm squared of the amplitude for two of the three colors must be less than
$\frac{1}{200}$, concluding the proof. 
\end{proof}

The next three Lemmas tell us what Test 3 actually implies.

\begin{lemma}
\label{lemma3} Given that the quantum proof would pass both Test 1 and part a) of Test 2 with probability of failure less than $\frac{1}{8n^{6}}$, then the
probability of measuring $\ket{\overline 0}=F_3 \ket{0}$ in the Fourier basis on the color register is greater than $1/5$ when $n$ is large enough.
\end{lemma}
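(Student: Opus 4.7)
The plan is to compute directly the probability $P$ that the color register of $\ket{\Psi}$ yields $F_3\ket{0}$ when measured in the Fourier basis. Applying $I\tensor F_3$ and reading off the amplitude on $\ket{i}\tensor F_3\ket{0}$ gives
$$P \;=\; \frac{1}{3}\sum_i |\alpha_i|^2 \,\abs{\beta_{i,0}+\beta_{i,1}+\beta_{i,2}}^2,$$
so the goal reduces to showing $P > 1/5$ for $n$ large enough.

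Next I would exploit \ref{lemma2}. Call an index $i$ \emph{good} when $|\alpha_i|\ge 1/n^2$. For every good $i$ the lemma provides a unique $j(i)$ with $|\beta_{i,j(i)}|^2 \ge 99/100$, so $\sum_{j\neq j(i)}|\beta_{i,j}|^2\le 1/100$. By the reverse triangle inequality followed by Cauchy--Schwarz on the two terms $j\neq j(i)$,
$$\abs{\beta_{i,0}+\beta_{i,1}+\beta_{i,2}} \;\ge\; |\beta_{i,j(i)}| - \sum_{j\neq j(i)} |\beta_{i,j}| \;\ge\; \sqrt{99/100} - \sqrt{2/100},$$
which yields an absolute constant $c > 0.72$ such that $\abs{\beta_{i,0}+\beta_{i,1}+\beta_{i,2}}^2 \ge c$ uniformly over all good $i$.

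To finish, the total $|\alpha_i|^2$-mass carried by \emph{bad} indices is at most $n\cdot 1/n^4 = 1/n^3$, so the good indices account for at least $1 - 1/n^3$ of the mass. Plugging in,
$$P \;\ge\; \frac{c}{3}\pare{1 - \frac{1}{n^3}},$$
and since $c/3 > 1/5$ strictly, this beats $1/5$ for $n$ sufficiently large. The same argument applied to $\ket{\Phi}$ is symmetric. I expect no serious obstacle: the only subtle point is that the $\beta_{i,j}$ are complex with uncontrolled phases, but we only ever need a \emph{lower} bound on the magnitude of the sum $\beta_{i,0}+\beta_{i,1}+\beta_{i,2}$, which the reverse triangle inequality gives regardless of phase --- the worst case is precisely when the two small amplitudes point opposite to the dominant one, and this is already absorbed into the constant $c$.
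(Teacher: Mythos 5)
Your proposal is correct and follows essentially the same route as the paper: compute the marginal probability as $\frac{1}{3}\sum_i|\alpha_i|^2|\beta_{i,0}+\beta_{i,1}+\beta_{i,2}|^2$, invoke \ref{lemma2} on the high-amplitude nodes, lower-bound each such term via the reverse triangle inequality plus Cauchy--Schwarz, and absorb the negligible mass of low-amplitude nodes. Your constant $c\approx 0.7286$ is in fact slightly sharper bookkeeping than the paper's rounded claim of $1/4$ per good node, but the argument is the same.
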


\begin{proof} Assume that the node register is measured. If the outcome is $i$, then the probability of obtaining $\ket{\overline 0}$ in the Fourier
basis on the color register is given by $$
\frac{1}{3}|\beta_{i,0}+ \beta_{i,1}+\beta_{i,2}|^2.$$
For all $i$ with probability larger than $1/n^2$ \ref{lemma2} applies, in which case we can assume w.l.o.g that $|\beta_{i,0}|^2>99/100$ and
$|\beta_{i,1}|^2+|\beta_{i,2}|^2 \leq 1/100$. Using the Cauchy-Schwarz inequality, we obtain
\begin{eqnarray*} 
\frac{1}{3}|\beta_{i,0}+ \beta_{i,1}+\beta_{i,2}|^2 
 & \geq & \frac{1}{3}|  \ |\beta_{i,0}| -| \beta_{i,1}+\beta_{i,2}| \ | ^2 \\
 & \geq & \frac{1}{3}|  \ |\beta_{i,0}| -\sqrt{2 (| \beta_{i,1}|^2 +|\beta_{i,2}|^2) }\ | ^2 \\
 & \geq & \frac{1}{4} 
\end{eqnarray*}
Now, note that only $n-1$ of the nodes can have a probability smaller than $1/n^2$, and therefore the probability of obtaining 0 on the color register
is at least $(1-(n-1) \frac{1}{n^2})\frac{1}{4} \geq \frac{1}{5} $ for large enough $n$.
\end{proof}

\begin{lemma}
\label{lemma4} Given a state $\ket{X} = \sum_{i} \gamma_{i}\ket{i}$ such that there exist an $l$ with $\prob{\gamma_{l}} < \frac{1}{2n}$, then the
probability of not getting $\ket{\overline 0}=F_n \ket{0}$ when we measure $\ket{X}$ in the Fourier basis is at least $\frac{1}{16n^{2}}$.
\end{lemma}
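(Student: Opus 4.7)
The plan is to directly compute the probability of obtaining the Fourier outcome and bound it away from one. Since $F_n\ket{0} = \frac{1}{\sqrt{n}}\sum_i \ket{i}$, the probability of measuring $\ket{\overline 0}$ is
$$p \; := \; \abs{\braket{\overline 0}{X}}^{2} \; = \; \frac{1}{n}\abs{\sum_{i} \gamma_{i}}^{2},$$
so the goal is to show $p \leq 1 - \frac{1}{16n^{2}}$ whenever some index $l$ satisfies $\abs{\gamma_l}^2 < \frac{1}{2n}$.

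First I would reduce to the case of real non-negative amplitudes. The triangle inequality gives $\abs{\sum_i \gamma_i} \leq \sum_i \abs{\gamma_i}$, and passing from $(\gamma_i)_i$ to $(\abs{\gamma_i})_i$ preserves both the normalization and the value of $\abs{\gamma_l}$; so we may assume $\gamma_i \geq 0$. Writing $t := \gamma_l$ and applying Cauchy--Schwarz on the complement of $l$,
$$\sum_{i \neq l} \gamma_{i} \; \leq \; \sqrt{n-1}\cdot\sqrt{\sum_{i \neq l} \gamma_{i}^{2}} \; = \; \sqrt{(n-1)(1-t^{2})},$$
so
$$p \; \leq \; \varphi(t) \; := \; \frac{1}{n}\left(t + \sqrt{(n-1)(1-t^{2})}\right)^{2}.$$

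Next I would observe that $\varphi$ is monotone increasing on $[0, 1/\sqrt{n}]$ with $\varphi(1/\sqrt{n})=1$, which is immediate from differentiating: the sign of $\varphi'(t)$ is that of $1-\sqrt{(n-1)t^{2}/(1-t^{2})}$, which vanishes exactly when $nt^{2}=1$ (reflecting that the uniform superposition is the unique maximizer of the Fourier overlap with $\ket{\overline 0}$). Since the hypothesis forces $t < 1/\sqrt{2n} < 1/\sqrt{n}$, this gives $p \leq \varphi(1/\sqrt{2n})$.

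The remaining step is an arithmetic check. Expanding the square at $t=1/\sqrt{2n}$ and using the elementary estimate $\sqrt{(n-1)(2n-1)} \leq \sqrt{2}\,n$, one obtains
$$1 - \varphi\!\left(1/\sqrt{2n}\right) \; \geq \; \frac{3-2\sqrt{2}}{2n} - \frac{1}{n^{2}},$$
which is of order $1/n$ and therefore dominates $1/(16n^{2})$ for all $n$ beyond a small absolute threshold. The heart of the argument is really the monotonicity of $\varphi$, which formalizes the intuition that the overlap with $F_n\ket{0}$ can only be close to one if no amplitude deviates too far from the uniform value $1/\sqrt{n}$; everything else is bookkeeping, and I do not expect any serious obstacle.
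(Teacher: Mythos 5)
Your proof is correct, but it follows a genuinely different route from the paper. The paper never touches the explicit form of the overlap: it invokes the general inequality $\sqrt{1-\abs{\braket{X}{\overline 0}}^2}=D(\ket{X},\ket{\overline 0})\ge D(P,Q)$ relating the pure-state distance to the total variation distance of the outcome distributions of any von Neumann measurement, applies it to the computational-basis measurement, and notes that $D(P,Q)\ge\demi\abs{\,\prob{\gamma_l}-\tfrac1n\,}\ge\tfrac{1}{4n}$; squaring gives $1-\abs{\braket{X}{\overline 0}}^2\ge\tfrac{1}{16n^2}$ in one line, valid for every $n$. You instead compute $\abs{\braket{\overline 0}{X}}^2=\tfrac1n\abs{\sum_i\gamma_i}^2$ directly and solve the extremal problem of maximizing it subject to one amplitude being pinned below $1/\sqrt{2n}$, via the triangle inequality, Cauchy--Schwarz on the remaining coordinates, and the monotonicity of $\varphi$ on $[0,1/\sqrt n]$; all of these steps check out (including the derivative computation and the estimate $\sqrt{(n-1)(2n-1)}\le\sqrt2\,n$). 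What your approach buys is a strictly stronger bound, $1-p=\Omega(1/n)$ rather than $\Omega(1/n^2)$, which would propagate to a slightly better soundness constant; what it costs is more bookkeeping and the restriction to $n$ above a small threshold (roughly $n\ge 13$, since you need $\tfrac{3-2\sqrt2}{2n}-\tfrac1{n^2}\ge\tfrac1{16n^2}$), whereas the lemma as stated claims the $\tfrac{1}{16n^2}$ bound for all $n$. That caveat is harmless in context --- the surrounding lemmas are already only asserted for large enough $n$ --- but you should state it explicitly rather than leave it as ``beyond a small absolute threshold.''
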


\begin{proof} Let $P$ and $Q$ be the probability distributions when measuring $\ket{X}$ and 
$F_n \ket{0}$ respectively in the computational basis. Using the same techniques as in \ref{lemma1} we get:
\begin{eqnarray*}
\sqrt{1-|\braket{X}{\overline 0}|^2} 
&{\stackrel{\text{\tiny def}}{=}}  &D(\ket{X}, \ket{\overline 0}) \\
&\ge &D(P,Q) \\
&{\stackrel{\text{\tiny def}}{=}} &\demi \sum_{i} \left| \prob{\gamma_{i}} - \frac{1}{n} \right| \\
&\ge &\demi \left| \prob{\gamma_{l}} - \frac{1}{n} \right| \\
& \ge &\frac{1}{4n}\\
\end{eqnarray*}
This implies that the probability of failing the test is greater than  $\frac{1}{16n^{2}}$.
\end{proof}

\begin{lemma}
\label{lemma5} Assuming Test 1, Test 3 and part a) of Test 2 would succeed with probability larger than 
$\frac{1}{8n^{6}}$, then it must be that for all $i$, $\prob{\alpha_{i}} \geq \frac{1}{10n}$.
\end{lemma}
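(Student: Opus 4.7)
The plan is to argue by contradiction: I assume some $|\alpha_l|^2 < \frac{1}{10n}$ and show that, under the hypothesis that Test 1 and part a) of Test 2 fail with probability at most $\frac{1}{8n^6}$, Test 3 must then fail with probability strictly larger than $\frac{1}{8n^6}$, contradicting the assumption.

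First I would exploit \ref{lemma3}: under the present hypothesis, the probability $p$ that measuring the color register in the Fourier basis yields $F_3\ket{0}$ is at least $1/5$ (for $n$ large enough). Writing $\ket{\Psi} = \sum_i \alpha_i\ket{i}\sum_j \beta_{i,j}\ket{j}$ and projecting the color part onto $F_3\ket{0} = \frac{1}{\sqrt 3}(\ket{0}+\ket{1}+\ket{2})$, the index register collapses (after normalisation) to $\ket{\eta} = \frac{1}{\sqrt{p}}\sum_i \alpha_i \gamma_i \ket{i}$, where $\gamma_i = \frac{1}{\sqrt{3}}(\beta_{i,0}+\beta_{i,1}+\beta_{i,2})$. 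A short Cauchy--Schwarz computation combined with $\sum_j \prob{\beta_{i,j}} = 1$ yields the convenient uniform bound $|\gamma_i|^2 \le 1$.

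Next I would feed $\ket{\eta}$ into the contrapositive of \ref{lemma4}. Using $|\gamma_l|^2 \le 1$ and $p \ge 1/5$, the $l$-th amplitude of $\ket{\eta}$ has squared magnitude bounded by $|\alpha_l|^2/p \le 5|\alpha_l|^2 < \frac{1}{2n}$. By \ref{lemma4}, the conditional probability that the index register fails to produce $F_n\ket{0}$, given that the color register produced $F_3\ket{0}$, is at least $\frac{1}{16n^2}$. Multiplying by $p \ge 1/5$, the unconditional probability that Test 3 rejects is at least $\frac{1}{80n^2}$, which is strictly larger than $\frac{1}{8n^6}$ for $n$ large enough — the desired contradiction.

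The main technical point I expect to be careful about is the passage from $\ket{\Psi}$ to the post-measurement state $\ket{\eta}$, and in particular the direction of the bound $|\gamma_i|^2 \le 1$: it is precisely what guarantees that a small $|\alpha_l|$ forces the amplitude of $\ket{l}$ in $\ket{\eta}$ to be small, which is what licenses the use of \ref{lemma4}. Once that bookkeeping is set up, the rest of the argument is just chaining elementary inequalities.
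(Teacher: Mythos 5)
Your proposal is correct and follows essentially the same route as the paper: invoke \ref{lemma3} to get probability at least $1/5$ of observing $F_3\ket{0}$ on the color register, note that a node with $|\alpha_l|^2 < \frac{1}{10n}$ has post-measurement squared amplitude below $\frac{1}{2n}$, and then apply \ref{lemma4} to force Test 3 to fail with probability at least $\frac{1}{80n^2} > \frac{1}{8n^6}$. In fact you supply the Cauchy--Schwarz bound $|\gamma_i|^2 \le 1$ and the renormalization step explicitly, which the paper leaves implicit.
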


\begin{proof} Because of \ref{lemma3}, the probability of measuring 0 on the color register while performing Test 3 is at least $1/5$. 
Let $\ket{X} = \sum_{i} \gamma_{i}\ket{i}$ be the state after measuring 0 on the color register of $\ket{\Psi}$. Suppose that there was an $i$ in the original $\ket{\Psi}$ such that $|\alpha_i|^2<
1/(10n)$. 
Again, because of \ref{lemma3} we must have $|\gamma_i|^2<1/(2n)$. 
Now, from this fact and \ref{lemma4}, we conclude that $\ket{\Psi}$ would fail Test 3 with too large a probability. 
Therefore, for all $i$, $|\alpha_i|^2 \geq 1/(10n)$ .
\end{proof}

Now, using \ref{lemma1}, \ref{lemma2}, \ref{lemma3} and \ref{lemma5},  
it will be possible to prove the soundness of our verifier.

\begin{lemma}
\label{soundness-thm} If $x \not\in$ $\tCOL$ then all quantum proofs will fail the test with probability at least $\frac{1}{24n^6}$
\end{lemma}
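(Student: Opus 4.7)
The plan is to argue by contradiction: assume each of the three tests rejects with probability strictly less than $\frac{1}{8n^6}$, and use \ref{lemma1}, \ref{lemma2} and \ref{lemma5} to extract a valid $3$-coloring of $G$, contradicting $G \notin \tCOL$. Because the verifier picks one of the three tests uniformly, showing that at least one of them must reject with probability at least $\frac{1}{8n^6}$ yields the claimed overall rejection probability $\frac{1}{3} \cdot \frac{1}{8n^6} = \frac{1}{24n^6}$.

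Under the contradiction hypothesis, \ref{lemma5}, applied symmetrically to both $\ket{\Psi}$ and $\ket{\Phi}$, yields $\prob{\alpha_i},\prob{\alpha'_i} \geq \frac{1}{10n}$ for every node $i$. Since $\frac{1}{10n} > \frac{1}{n^2}$ for $n$ large enough, \ref{lemma2} then assigns each $i$ a unique dominant color $C(i) \in \{0,1,2\}$ with $\prob{\beta_{i,C(i)}} \geq \frac{99}{100}$, and symmetrically a color $C'(i)$ on the $\ket{\Phi}$ side with $\prob{\beta'_{i,C'(i)}} \geq \frac{99}{100}$. This produces two candidate colorings of $V$.

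Next, I would argue that $C = C'$ everywhere: if $C(i) \neq C'(i)$ for some $i$, then measuring both registers in the computational basis produces $(i,C(i))$ from $\ket{\Psi}$ and $(i,C'(i))$ from $\ket{\Phi}$ with probability at least $\prob{\alpha_i}\prob{\beta_{i,C(i)}}\prob{\alpha'_i}\prob{\beta'_{i,C'(i)}} \geq \pare{\frac{1}{10n}\cdot\frac{99}{100}}^2$, which is of order $1/n^2$ and exceeds $\frac{1}{8n^6}$ for large $n$; this would force part a) of Test 2 to reject, a contradiction. Hence $C$ is a single well-defined coloring of $G$. Since $G \notin \tCOL$, there is a monochromatic edge $(i,j) \in E$, and the analogous lower bound gives probability $\Omega(1/n^2)$ of observing the conflicting pair $(i,C(i)),(j,C(j))$, which makes part b) of Test 2 reject. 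This final contradiction completes the argument.

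The main obstacle will be bookkeeping rather than any conceptual leap. Each of \ref{lemma1}--\ref{lemma5} is stated only on the $\ket{\Psi}$ side, so I would need to invoke the symmetry of the protocol explicitly to obtain the matching bounds for $\ket{\Phi}$, and then verify that the products of the constants coming from \ref{lemma2} and \ref{lemma5} comfortably dominate $\frac{1}{8n^6}$ for all sufficiently large $n$ (the soundness definition only demands a polynomial gap, so any $\Omega(1/n^2)$ violation suffices). The conceptual heart of the argument is simply that once all three tests pass with small enough failure probability, the dominant-amplitude map $C$ is forced to be a bona fide $3$-coloring of $G$, with part a) of Test 2 enforcing consistency between the two registers and part b) enforcing the edge constraint.
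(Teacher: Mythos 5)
Your proposal is correct and follows essentially the same route as the paper: assume each test rejects with probability below $\frac{1}{8n^6}$, use \ref{lemma2} and \ref{lemma5} to extract a dominant coloring with all nodes having weight $\Omega(1/n)$, locate a monochromatic edge, and show part b) of Test 2 rejects with probability $\Omega(1/n^2)$, far exceeding the required bound. The only cosmetic difference is that the paper pins down the second register's behaviour directly via \ref{lemma1} (closeness of the two measurement distributions) rather than defining a second coloring $C'$ and using part a) of Test 2 to force $C=C'$; both work.
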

\clearpage

\begin{proof} Assume that the graph $G$ is not 3 colorable and that it would fail Test 1, Test 3 and part a) of Test 2 with probability smaller than
$\frac{1}{8n^6}$. 
Let $C(i)={\max}_j |\beta_{ij}|$ be a coloring. 
Because of \ref{lemma1}  and \ref{lemma2}, this maximum is well defined. 
Since the graph is not 3-colorable, there exists two adjacent vertices $v_1$ and $v_2$ in $G$ such that $C(v_1)=C(v_2)$. Because of \ref{lemma5},
when performing Test 2 we have a probability of at least $1/(10n^2)$ of measuring $C(v_1)$ in the first register, and because of \ref{lemma1}
and \ref{lemma5}, we have a probability $1/(10n^2)-1/n^3$ of measuring  $C(v_2)$ in the second register. Combining these results, we have a probability
larger than $\frac{1}{8n^6}$ of failing condition b) of Test 2.
\end{proof}

\section{Polynomial size classical proof for languages in $\BT$}

In this section, we will prove the following statement.

\begin{lemma}\label{containsNP}
$\BT \subseteq \NP$
\end{lemma}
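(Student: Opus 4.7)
The plan is to classically simulate the verifier in polynomial time, using the two unentangled proofs themselves (to sufficient precision) as the $\NP$ witness. The essential observation is that, although $\mathcal{V}$ is allowed to produce a quantum circuit $\mathcal{Q}$ of polynomial size $q(n)$, this circuit acts on only $O(\log n)$ qubits, so the full unitary matrix $U_{\mathcal{Q}}$ has polynomial dimension $n^{O(1)} \times n^{O(1)}$. Likewise each register $\ket{w_1},\ket{w_2} \in \mathcal{H}_2^{\otimes c\log n}$ is a complex vector with only $n^c$ amplitudes. So the entire computation $\mathcal{Q}(\ket{w_1}\otimes\ket{w_2})$ lives in a space of polynomial dimension.

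First I would describe the $\NP$ certificate: a classical list of the $n^c$ amplitudes of $\ket{w_1}$ and of $\ket{w_2}$, each given to $r(n)$ bits of precision for a suitable polynomial $r$. The classical $\NP$-verifier, on input $x$, runs $\mathcal{V}(x)$ to obtain the gate-by-gate description of $\mathcal{Q}$, and then, using the natural-gate-set property, computes each constituent gate's matrix to the required precision in polynomial time. It composes these gates into the full unitary $U_{\mathcal{Q}}$ by ordinary matrix multiplication of polynomial-size matrices (there are at most $q(n)$ such multiplications, each involving matrices of dimension $n^{O(1)}$), forms the tensor product $\ket{w_1}\otimes\ket{w_2}$, applies $U_{\mathcal{Q}}$ to it, and sums the squared magnitudes of the amplitudes corresponding to accepting computational-basis outcomes. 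It accepts iff this estimated acceptance probability is at least $1-\tfrac{1}{2p(n)}$.

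The key technical step is the error analysis: I need to choose $r(n)$ polynomial in $n$ so that the cumulative error — from truncating the witness amplitudes, from approximating each of the $q(n)$ gate matrices, and from the subsequent polynomial-many arithmetic operations — shifts the computed acceptance probability by at most $\tfrac{1}{4p(n)}$. This is routine because quantum circuits of size $s$ are $s$-Lipschitz in operator norm with respect to per-gate error and the state norms are $1$; it suffices to take $r(n)=\Theta(\log(p(n)q(n)n))$ plus a polynomial margin to absorb the matrix-arithmetic errors, which is still polynomial. With this choice, completeness and soundness are preserved: in the yes-case, the true honest states give acceptance probability $1$, so the approximated value exceeds $1-\tfrac{1}{2p(n)}$; in the no-case, every product state gives acceptance probability below $1-\tfrac{1}{p(n)}$, so even the best approximate witness cannot drive the computed value above $1-\tfrac{1}{2p(n)}$.

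The main obstacle, and really the only nontrivial point, is getting the precision bookkeeping right so that the polynomial bound $r(n)$ genuinely suffices; the rest is a straightforward reduction of quantum circuit evaluation on $O(\log n)$ qubits to deterministic polynomial-time linear algebra, which is why the class was carefully defined with a classical verifier that only generates, rather than executes, its quantum circuit.
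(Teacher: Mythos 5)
Your proposal is correct and follows essentially the same route as the paper: the $\NP$ witness is a polynomial-size classical description of the logarithmic-size quantum proof, the classical verifier reconstructs the polynomial-dimensional unitary of $\mathcal{Q}$ from the natural gate set, computes the acceptance probability by direct linear algebra, and accepts at a threshold placed inside the completeness--soundness gap. You are in fact more explicit than the paper about the precision bookkeeping (the paper simply asserts the probability can be approximated to within $g/3$), and the only cosmetic difference is that the paper phrases the witness as a density matrix rather than a pair of amplitude vectors.
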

\begin{proof}

We will show that for all languages $L$ in $\BT$ there is a totally classical polynomial-time verifier $V'$ 
for the language.
The verifier $V'$ will receive a classical description (density matrix) of the quantum proof and 
compute the acceptance probability of the quantum circuit $\mathcal{Q}=V(x)$ with enough precision.

Let $g$ be the gap between the soundness and completeness for the verifier 
associated with the language $L$. Following the notation introduced in \ref{defclasse}, $g=1/p(n)$.

More precisely, the (totally) classical verifier $V'$ will do  the following in order to accept the language $L$.
Let us call the classical witness $\rho$.  
The verifier $V'$ first simulates $V$ (the verifier as defined for the class $\BT$) to compute a description of the quantum circuit 
$\mathcal{Q} = \mathcal{V}(x) \in C(\mathcal{U})$. 
Let $U$ be the unitary transformation corresponding to $\mathcal{Q}$ and 
$\Pi_{accept}$ be the projector corresponding to mesuring 1 on the accepting qubit.
The verifier $V'$ approximates the value $Tr(\Pi_{accept} U (\rho \otimes \rho)U^\dagger)$
with precision $g/3$.
The verifier $V'$ accepts if the acceptance probability is larger than $1-g/2$. 
This value can be evaluated in polynomial time for the number of qubits $\mathcal{Q}$ act on 
is logarithmic and the number of gates is polynomial. 
\end{proof}


\section{Conclusion}

In this paper, we proved that $\NP = \BT$. 
This gives a quantum characterization of one of the most important complexity classes. 
Moreover, this characterization is interesting in the sense that it is done within the paradigm of interactive proofs. 
This result gives us insight on the power of quantum information and the subtlety of entanglement.\\
For future work, on the one hand, a very natural way to improve on our result would be to investigate if  $\NP  = \BT$
even when the gap
between soundness and completeness is constant rather than only non-negligible.  
As stated before, it would be surprising since it could lead to the result $\QMAd = \NEXP$.  
As an intermediate result, one could try to show a better gap between completeness and soundness using a constant
number of unentangled parts in the proof. 
This is related to the question of whether $\QMLk = \QMLd$, and also reminiscent of the $\QMAk$ vs $\QMAd$ problem.

On the other hand, Aaronson et al. \citep{Aaronson} achieved a constant gap at the cost of requiring $\sqrt{n}\polylog(n)$ registers.  
Is it possible to use fewer unentangled registers and still achieve a constant gap? 
The ultimate goal would indeed be to show that all languages in $\NP$ have
logarithmic size quantum proofs with constant completeness and
soundness. If this it is not the case, what is the minimal length of such a proof in order to obtain a constant gap for
an $\NP$-complete problem?

\begin{acknowledge}
We would like to thank Tsuyoshi Ito for insightful discussion. 
We also would like to thank Fr\'ed\'eric Dupuis and Richard MacKenzie for their help with the language of
Shakespeare and the QuantumWorks project for funding this research.
\end{acknowledge}

\bibliography{texrefs}

\end{document}